\newtheorem{thm}{Theorem}[section]
\newtheorem{algorithm}[thm]{Algorithm}
\newtheorem{problem}[thm]{Problem}
\theoremstyle{proposition}
\newtheorem{prop}{Proposition}[section]
\theoremstyle{definition}
\newtheorem{defn}{Definition}[section]
\theoremstyle{remark}
\numberwithin{equation}{section}
\begin{document}
%\hyphenpenalty=100000

\begin{flushright}

{\Large \textbf{\\On an application of multidimensional arrays}}\\[5mm]
{\large \textbf{Krasimir Yordzhev$^\mathrm{*}$\footnote{\emph{E-mail: yordzhev@swu.bg}}  }}\\[1mm]
{\footnotesize \it South-West University ''Neofit Rilski'',\\ Blagoevgrad, Bulgaria}
\end{flushright}

{\Large \textbf{Abstract}}\\[4mm]
\fbox{%
\begin{minipage}{5.4in}{\footnotesize This article discusses some difficulties in the implementation of combinatorial algorithms associated with the choice of all elements with certain properties among the elements of a set with great cardinality.The problem has been resolved by using multidimensional arrays. Illustration of the method is a solution of the problem of obtaining one representative from each equivalence class with respect to the described in the article equivalence relation in the set of all $m\sim n$  binary matrices. This equivalence relation has an application in the mathematical modeling in the textile industry.
} \end{minipage}}\\[1mm]
\footnotesize{\it{Keywords:} binary matrix; equivalence relation; factor-set; cardinality; multidimensional array }\\[1mm]
\footnotesize{{2010 Mathematics Subject Classification:} 05B20; 68P05}

\section{Introduction and task formulation}\label{I1}
The following problem often occurs in computer science:
\begin{problem}\label{problem1}
Let $M$  be a finite set and let $\sim$  be an equivalence relation in $M$. Describe and implement an algorithm that receives exactly one representative from each equivalence class with respect to  $\sim$.
\end{problem}

As a consequence of this problem follows the combinatorial problem of finding the cardinality of the factor set $\widetilde{M} = M_{/\sim}$   consisting of all equivalence classes of $M$  with respect of  $\sim$.

We assume that for every  $x\in M$, there is a procedure $K(x)$  which receives all elements of  $M$, which are equivalent to  $x$.

Since $M$  is a finite set, then there exists bijective mapping
$$b\; :\; \leftrightarrow \left\{ 1,2,\ldots ,\left| M\right| \right\} ,$$
which will call \emph{numbering function}. Thus, each element of  $M$ uniquely corresponds to an element of Boolean array $H[\; ]$  with size equal to the cardinality $|M|$  of the set  $M$. Moreover, the element  $x\in M$ is \emph{selected} if $H[b(x)]=1$  and $x$  is \emph{not selected} if  $H[b(x)]=0$.

The next algorithm is a modification of the well-known method, known as ''Sieve of Era\-to\-sthe\-nes'' [\cite{Reingold,markovska}] solves Problem \ref{problem1}.

\begin{algorithm}\label{Algorithm1}
Receives exactly one representative of each equivalence class of the factor-set  $\widetilde{M}=M_{/\sim}$.

\textbf{Input}: Finite set $M$

\textbf{Output}: Set $N\subseteq M$

\begin{enumerate}
\item	 $N:=\emptyset$;

\item	Declare a Boolean array $H[\; ]$  with size equal to the cardinality $|M|$  of the set $M$  and put $H[b(x)]:=0$  for all $x\in M$;

\item	For every $x\in M$  such that $H[b(x)]=0$\textbf{ do}

\hspace{1cm}\{ \textbf{Begin} of loop 1

\item	\hspace{1cm}$N:=N\cup \{x\}$;

\item	\hspace{1cm}$H[b(x)]:=1$;

\item\label{6}	\hspace{1cm}Using the procedure $K(x)$  obtain the set $P_x =\{y\in M \; |\;y\sim x \}$;

\item	\hspace{1cm}For every $y\in P_x$  obtained in step \ref{6} \textbf{do}

\hspace{2cm} \{ \textbf{Begin} of loop 2

\item \hspace{2cm} $H[b(y)]:=1$;

\hspace{2cm} \textbf{End} of loop 2 \}

\hspace{1cm}\textbf{End} of loop 1 \}

\item	End of the algorithm.
\end{enumerate}
\end{algorithm}

Algorithm \ref{Algorithm1} has a number of disadvantages, the main of which is that it is practically inapplicable for programs when a sufficiently great number of elements is present in the base set $M$. This limitation comes from the maximum integer, which can be used in the corresponding programming environment. For example, by standard in the C++ language the biggest number of the type \textbf{unsigned long int} is equal to $2^{32} - 1$, which in a number of cases is insufficient for the previously defined array $H[\; ]$  to be completely addressed. The purpose of this article is to avoid this problem by using a multidimensional Boolean array, the elements of which have a one-to-one correspondence to the elements of the base set, with a much smaller range of the indices. There are many publications related to multidimensional arrays, for example [\cite{arxiv}], but they are not used for our specific goals and objectives. Another solution to the problem is the use of dynamic data structures or other special programming techniques [\cite{Collins,Sutter,Tan}] but it is not the subject of consideration in this article. 	

\emph{Binary (or Boolean, or (0,1)-matrix)} is a matrix whose elements are equal to 0 or 1.

Let ${\cal B}_{m\times n}$  be the set of all $m\times n$  binary matrices. It is well known that

\begin{equation}\label{eq1}
\left| {\cal B}_{m\times n} \right| = 2^{mn}
\end{equation}

In this work, we will consider and solve the following special case of Problem \ref{problem1}:

\begin{problem}\label{problem2}
 Let ${\cal B}_{m\times n}$  be the set of all $m\times n$  binary matrices and let  $X,Y\in {\cal B}_{m\times n}$. We define an equivalence relation $\rho$  as follows: $X\rho Y$ if and only if we can obtain $X$  from $Y$  by a sequential moving of the last row or column to the first place. Find the cardinality $|{\cal B}_{m\times n/\rho} |$  of the factor-set $\widetilde{M} ={\cal B}_{m\times n/\rho}$  and receive a single representative of each equivalence class.
\end{problem}

The proof that $\rho$  is an equivalence relation is trivial and we will omit it here.

The equivalence classes of ${\cal B}_{m\times n}$   by the equivalence relation $\rho$  are a particular kind of \emph{double coset} [\cite{r2,r4,r6}]. They make use of substitutions group theory and linear representation of finite group theory [\cite{r4,r6}].

When $m=n$, the elements of the factor-set $\widetilde{M} ={\cal B}_{n\times n/\rho}$  put carry into practice in the textile technology [\cite{b2,textile}].

In [\cite{y9}] an algorithm is shown, which utilizes theoretical graphical methods for finding the factor set  $\widetilde{S} =S_{n/\rho}$, where $S_{n} \subset {\cal B}_{n\times n}$  is a set of all permutation matrices, i.e. binary matrices having exactly one 1 on each row and each column. In [\cite{az}] we extended this problem in the case when $\rho$  is an arbitrary permutation.

The author of this paper is not familiar with an existing a general formula expressed as a function of $m$ and $n$ for finding  $|{\cal B}_{m\times n/\rho} |$.  The goal of this paper is to describe an effective algorithm for finding the number of elements of the factor set  $\widetilde{M} ={\cal B}_{m\times n/\rho}$, as well as finding a single representative of each equivalence class. Here we will describe an algorithm, which overcomes some difficulties, which would inevitably arise with sufficiently great m and n if we apply the classical algorithm (Algorithm \ref{Algorithm1}). The main difficulty arises from the great number of elements of $\widetilde{M} ={\cal B}_{m\times n/\rho}$  with comparatively small integers $m$  and $n$, according to formula (\ref{eq1}).

For undefined notions and definitions, we refer to [\cite{r1,sachkov}].

\section{Description of an algorithm with the use of a multi\-di\-men\-si\-o\-nal array}\label{I3}
\begin{thm} \label{Theorem1} Let us denote by ${\cal P}_n$  the set
\begin{equation}\label{(11)}
{\cal P}_n =\left\{ 0,1,\ldots ,2^n -1 \right\}
\end{equation}
  Then a one-to-one correspondence (bijection) between the elements of the Cartesian product
  $\displaystyle {\cal P}_n^m =\underbrace{{\cal P}_n \times {\cal P}_n \times \cdots \times {\cal P}_n}_m $  and the elements of the set ${B}_{m\times n}$  of all  $m\times n$  binary matrices exists.
\end{thm}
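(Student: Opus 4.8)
The plan is to realize the bijection row by row, using the uniqueness of the binary (base-$2$) representation of an integer. First I would treat the case $m=1$. The standard positional number system guarantees that every integer $k$ with $0 \le k \le 2^n - 1$ can be written uniquely as $k = \sum_{j=0}^{n-1} a_j 2^j$ with each digit $a_j \in \{0,1\}$: existence follows by repeated division by $2$ (or by induction on $n$), and uniqueness follows because two distinct digit strings must differ in some highest position, which already forces the corresponding values to differ. Reading $(a_{n-1}, a_{n-2}, \ldots, a_0)$ as a binary row of length $n$ therefore defines a map $\beta : {\cal P}_n \to {\cal B}_{1\times n}$, whose explicit inverse is given by the same formula $\sum_{j=0}^{n-1} a_j 2^j$, provided each row is padded with leading zeros so that it always carries exactly $n$ entries. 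Hence $\beta$ is a bijection.

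Next I would assemble the general map $\varphi : {\cal P}_n^m \to {\cal B}_{m\times n}$ componentwise: send a tuple $(k_1, k_2, \ldots, k_m)$ to the $m\times n$ matrix whose $i$-th row is $\beta(k_i)$. The inverse $\varphi^{-1}$ reads off the $m$ rows of a matrix, converts each back to its integer value, and returns the resulting $m$-tuple. Verifying $\varphi^{-1} \circ \varphi = \mathrm{id}$ and $\varphi \circ \varphi^{-1} = \mathrm{id}$ is then immediate, since both compositions act on each row independently and reduce to the $m=1$ case already settled.

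Alternatively, and perhaps most cleanly, once $\varphi$ is exhibited as a well-defined injection one may finish by counting: since $|{\cal P}_n| = 2^n$ we have $|{\cal P}_n^m| = (2^n)^m = 2^{mn}$, which by (\ref{eq1}) equals $|{\cal B}_{m\times n}|$, and an injection between finite sets of equal cardinality is automatically a bijection.

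There is no serious obstacle here; the only point demanding genuine care is the uniqueness half of the binary representation, and in particular the insistence that each row carry exactly $n$ digits with leading zeros retained. This fixed width is precisely what makes $\beta$ surject onto all of ${\cal B}_{1\times n}$ rather than merely inject into it, and it is the same feature that later lets the array indices range over the comparatively small sets ${\cal P}_n$.
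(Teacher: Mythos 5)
Your proposal is correct and follows essentially the same route as the paper: both encode each integer $p_i \in {\cal P}_n$ as its $n$-digit binary representation padded with leading zeros, take these as the rows of the matrix, and verify injectivity and surjectivity by reading the rows back as integers. Your added remarks --- the explicit uniqueness argument for binary digits and the alternative finish via $|{\cal P}_n^m| = 2^{mn} = |{\cal B}_{m\times n}|$ --- are sound but not needed beyond what the paper already does.
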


\begin{proof}
We consider the mapping  $\alpha  : {\cal P}_n^m
\to {\cal B}_{m\times n}$, defined in the following way: If  $\pi\in {\cal P}_n^m$ and $\pi =<p_1 ,p_2 ,\ldots ,p_m >$  then let us denote by  $z_i$,  $i=1,2,\ldots ,m$, the representation of the integer  $p_i$ in a binary notation, and if less than $n$ digits 0 or 1 are necessary, we fill $z_i$ from the left with insignificant zeros, so that $z_i$  will be written with exactly $n$ digits. Since by definition,  $p_i \in {\cal P}_n $, i.e.  $0\le p_i \le 2^n -1$, this will always be possible. Then we form an $m\times n$   binary matrix, so that the $i$-th row is  $z_i ,$ $i=1,2,\ldots m$. Apparently this is a correctly defined mapping of  ${\cal P}_n^m$ to ${\cal B}_{m\times n}$. It is clear that for different $n$-tuples from ${\cal P}_n^m$  with the help of $\alpha$  we will obtain different matrices from  ${\cal B}_{m\times n}$, i.e. $\alpha$  is an injection. Conversely, rows of each binary matrix can be considered as natural numbers, written in binary system by using exactly $n$ digits 0 or 1, eventually with insignificant zeros in the beginning, that is, these numbers belong to the set  ${\cal P}_n =\{ 0,1,\ldots ,2^n -1\}$. Therefore each  $m\times n$ Binary matrix corresponds to an $m$-tuple of numbers   $<p_1 ,p_2 ,\ldots ,p_m >\in {\cal P}_m^n$, that is,  $\alpha$ is a surjection. Hence  $\alpha$ is a bijection.
\end{proof}

It is easy to see the validity of the following statement, which in fact shows the meaning of our considerations.

\begin{prop}\label{propmu}
Let us denote by  $\mu$ the maximum integer, which we use when coding the elements of the set ${\cal B}_{m\times n}$  by means of the bijection, defined in Theorem \ref{Theorem1}. Then, for sufficiently great $m$ and $n$, the following is valid:
\begin{equation}\label{(12)}
\mu =\max \left( 2^n -1,m\right) \ll |{\cal B}_{m\times n}
|=2^{mn}
\end{equation}	
\end{prop}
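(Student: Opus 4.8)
The plan is to proceed in two stages: first to pin down the exact value of $\mu$ directly from the coding scheme of Theorem \ref{Theorem1}, and then to verify the asymptotic inequality $\mu \ll 2^{mn}$ by elementary estimates.

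For the first stage, I would recall that the bijection $\alpha$ represents each matrix $A\in {\cal B}_{m\times n}$ as an $m$-tuple $\langle p_1,p_2,\ldots,p_m\rangle \in {\cal P}_n^m$. Storing such a tuple in a multidimensional Boolean array requires $m$ separate indices, one per coordinate, and each such index ranges over ${\cal P}_n=\{0,1,\ldots,2^n-1\}$. Consequently the only integers that actually occur in the encoding are the coordinate values, bounded above by $2^n-1$, together with the number $m$ of coordinates. The largest of these is by definition $\mu=\max\left(2^n-1,m\right)$, which establishes the first equality in (\ref{(12)}).

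For the second stage, I would bound each candidate for the maximum against $2^{mn}$ separately. Since $2^n-1<2^n\le 2^{mn}$ for every $m\ge 1$, and the ratio $(2^n-1)/2^{mn}<2^{-n(m-1)}$ tends to $0$ as $m,n\to\infty$, the first term is negligible. Likewise $m\le 2^m\le 2^{mn}$ for every $n\ge 1$, and $m/2^{mn}\to 0$. Combining the two cases yields $\mu/2^{mn}\to 0$, that is, $\mu\ll 2^{mn}$, which is the content of (\ref{(12)}).

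I do not expect a genuine obstacle here: once the structure of $\alpha$ is made explicit, both the identification of $\mu$ and the growth comparison are routine. The only point requiring a little care is the interpretation of ``the maximum integer we use'' — one must argue that no integer exceeding $\max\left(2^n-1,m\right)$ ever enters the addressing of the multidimensional array, and this is exactly what the $m$-tuple representation of Theorem \ref{Theorem1} guarantees, since each address is an $m$-tuple of elements of ${\cal P}_n$.
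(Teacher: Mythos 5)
Your argument is correct and is exactly the computation the paper has in mind: the paper itself dismisses the proof as ``Trivial,'' and your two stages --- reading off $\mu=\max\left(2^n-1,m\right)$ from the fact that every address is an $m$-tuple of elements of ${\cal P}_n$, and then checking $\max\left(2^n-1,m\right)/2^{mn}\to 0$ --- simply make that triviality explicit. No discrepancy with the paper's approach.
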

\begin{proof} Trivial. \end{proof}

Let $a$ and $b$  be integers,  $b\ne 0$. With $a/b$ we will denote the operation ''integer division'' of $a$ by $b$, i.e. if the division has a remainder, then the fractional part is cut, and with  $a\% b$ we will denote the remainder when dividing $a$  by  $b$. In other words, if $\displaystyle \frac{a}{b} = p+\frac{q}{b}$, where $p$ and $q$ are integers, $0\le q< b$ then by definition $a/b =p$, $a\% b =q$.

We consider the function	

\begin{equation}\label{(13)}
\xi (a) =\left( a\% 2\right) 2^{n-1} +a/2 ,
\end{equation}
where $\%$  and  $/$  are the defined in the above operations.

\begin{defn}\label{Definition1}
 Let $\alpha$   be the defined in the proof of Theorem \ref{Theorem1} bijection and let the functions $f_r ,f_c : {\cal P}_n^m \to
{\cal P}_n^m$  be defined such that for every $\pi =<p_1 ,p_2 ,\ldots ,p_m > \in {\cal P}_n^m$
\begin{equation}\label{(15)}
f_r (\pi )=<p_m ,p_1 ,p_2 ,\ldots p_{m-1} >
\end{equation}	
\begin{equation}\label{(16)}
f_c (\pi )=< \xi (p_1 ), \xi (p_2 ),\ldots ,\xi (p_m )>,
\end{equation}
where the function  $\xi (a) $ is the defined with (\ref{(13)}).
\end{defn}

\begin{thm}\label{Theorem2}
Let $A\in {\cal B}_{m\times n}$  be an arbitrary $m\times n$  binary matrix and let $\alpha$ be the defined in the proof of Theorem \ref{Theorem1} bijection. Let us to get the matrices
\begin{equation}\label{(17)}
B=\alpha \left( f_r \left( \alpha^{-1} (A)\right)\right)
\end{equation}	
and				
\begin{equation}\label{(18)}
C=\alpha \left( f_c \left( \alpha^{-1} (A)\right)\right)
\end{equation}	

Then $B$ is obtained from $A$ by moving the last row to the first place, and $C$ is obtained from $A$ by moving the last column to the first place (respectively the first row or column becomes the second, the second becomes the third respectively etc.).
\end{thm}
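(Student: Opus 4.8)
The plan is to treat the two claims separately, since $f_r$ rearranges the rows as whole units while $f_c$ acts on the bits inside each row, and the two operations do not interact. Throughout I would write $\alpha^{-1}(A) = \langle p_1, p_2, \ldots, p_m \rangle$, so that by the construction of $\alpha$ in Theorem \ref{Theorem1} the $n$-digit binary word $z_i$ of $p_i$ is exactly the $i$-th row of $A$.

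The row statement (\ref{(17)}) is then essentially immediate from the definitions. By (\ref{(15)}) we have $f_r(\alpha^{-1}(A)) = \langle p_m, p_1, \ldots, p_{m-1} \rangle$, and since $\alpha$ writes the $i$-th component of a tuple as the $i$-th row of the resulting matrix, the matrix $B$ has first row $z_m$, second row $z_1$, and in general $i$-th row equal to $z_{i-1}$ for $i \ge 2$. This is precisely the matrix obtained from $A$ by moving its last row to the first place, with every other row shifted down by one. No computation beyond unwinding the definitions is required here.

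The substantive part is the column statement (\ref{(18)}), where the only real work is to read off the arithmetic function $\xi$ of (\ref{(13)}) as a bit operation. I would fix $a \in \mathcal{P}_n$ and write its $n$-digit binary representation as $b_1 b_2 \cdots b_n$, i.e. $a = \sum_{j=1}^{n} b_j 2^{n-j}$ with each $b_j \in \{0,1\}$. Then $a \% 2 = b_n$ is the least significant bit, while $a/2 = \sum_{j=1}^{n-1} b_j 2^{n-1-j}$ is the integer whose $(n-1)$-digit binary word is $b_1 b_2 \cdots b_{n-1}$. Substituting into (\ref{(13)}) gives
\[
\xi(a) = b_n\, 2^{n-1} + \sum_{j=1}^{n-1} b_j\, 2^{n-1-j},
\]
whose $n$-digit binary representation is therefore $b_n b_1 b_2 \cdots b_{n-1}$; in words, $\xi$ carries the last bit of $a$ to the front and shifts the remaining bits one place to the right. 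I would also note that $b_n \le 1$ forces $\xi(a) \le 2^{n-1} + (2^{n-1}-1) = 2^n - 1$, so $\xi$ maps $\mathcal{P}_n$ into $\mathcal{P}_n$ and $f_c$ is well defined.

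Finally I would assemble the column claim. Writing the $i$-th row of $A$ as $z_i = b_{i,1} b_{i,2} \cdots b_{i,n}$, definition (\ref{(16)}) together with the bit description of $\xi$ shows that the $i$-th row of $C = \alpha(f_c(\alpha^{-1}(A)))$ is $b_{i,n} b_{i,1} \cdots b_{i,n-1}$. Thus the former last column $(b_{1,n}, \ldots, b_{m,n})$ of $A$ becomes the first column of $C$, and each remaining column of $A$ is shifted one place to the right, which is exactly the operation of moving the last column of $A$ to the first place. I expect the main obstacle to be precisely the bookkeeping in this $\xi$ computation, namely relating the integer operations $a \% 2$ and $a/2$ to the positions of the binary digits and confirming that the resulting integer, read with $n$ digits, is the cyclic shift $b_n b_1 \cdots b_{n-1}$; once that identity is in hand, both equivalences follow directly from the definitions of $\alpha$, $f_r$, and $f_c$.
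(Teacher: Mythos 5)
Your proposal is correct and follows essentially the same route as the paper's own proof: the row claim is read off directly from the definition of $f_r$ and the row-wise construction of $\alpha$, and the column claim reduces to interpreting $a\%2$ as the last binary digit and $a/2$ as the right shift, so that $\xi$ cyclically moves the last bit to the front. Your version is somewhat more explicit (writing out the binary expansion and checking that $\xi$ maps ${\cal P}_n$ into ${\cal P}_n$), but the underlying argument is identical.
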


\begin{proof} Let  $\pi =<p_1 ,p_2 ,\ldots ,p_m >=\alpha^{-1} (A)\in {\cal P}_n^m$. Then the integer  $p_i$, $0\le p_i \le 2^n -1$, $i=1,2,\ldots ,m$  will correspond to the $i$-th row of the matrix $A$. Then obviously, the matrix $B=\alpha (f_r (<p_1 ,p_2 ,\ldots ,p_m >))=\alpha (<p_m ,p_1 ,p_2 ,\ldots ,p_{m-1} >)$  is obtained from $A$ by moving the last row in the place of the first one, and moving the remaining rows one row below.

Let $p_i \in {\cal P}_n =\left\{ 0,1,\ldots ,2^n -1 \right\}$, $i=1,2,\ldots ,m$. Then $d_i =p_i \% 2$  gives the last digit of the binary notation of the integer  $p_i$. If  $p_i$ is written in binary notation with precisely $n$  digits, optionally with insignificant zeros in the beginning, then by applying integer division of  $p_i$   by 2, we practically remove the last digit $d_i$  and we move it to the first position, in case we multiply by $2^{n-1}$  and add it to  $p_i /2$. This is, by definition, how the function  $\xi (p_i )$ works. Hence, the $m\times n$ binary matrix $C=\alpha (f_c (<p_1 ,p_2 ,\ldots , p_m >))=\alpha (<\xi (p_1 ),\xi (p_2 ),\ldots ,\xi (p_m )>)) $  is obtained from the matrix $A$ by moving the last column to the first position, and all the other columns are moved one column to the right.
\end{proof}

From the definitions of the functions  $f_r$, according to (\ref{(15)}) and  $f_c$, according to (\ref{(16)}) it is easy to verify the validity of the following
\begin{prop} \label{Proposition2}
If by definition
\begin{equation}\label{(19)}
f_r^0 (\pi )=f_c^0 (\pi )=\pi
\end{equation}		
\begin{equation}\label{(20)}
f_r^k (\pi )=f_r \left( f_r^{k-1} (\pi )\right)
\end{equation}		
\begin{equation}\label{(21)}
f_c^k (\pi ) =f_c \left( f_c^{k-1} (\pi )\right) ,
\end{equation}
where $\pi\in {\cal P}_n^m$ and $k$ is a positive integer, then
\begin{equation}\label{(22)}
f_r^m (\pi )=\pi
\end{equation}		
and
\begin{equation}\label{(23)}
f_c^n (\pi )=\pi .
\end{equation}		
\end{prop}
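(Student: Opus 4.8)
The plan is to observe that both identities assert that an explicitly described cyclic operation has the expected order, so the whole proof reduces to tracking what each iterate does. For $f_r$ the underlying operation is a cyclic shift of the $m$ entries of the tuple, while for $f_c$ it is the componentwise action of $\xi$, which I will show is a cyclic rotation of the $n$ binary digits.

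First I would handle (\ref{(22)}). Starting from $\pi =<p_1 ,p_2 ,\ldots ,p_m >$, I would compute the iterates of $f_r$ by induction on $k$, establishing the closed form
\[
f_r^k (\pi )=<p_{m-k+1} ,p_{m-k+2} ,\ldots ,p_m ,p_1 ,\ldots ,p_{m-k} >,
\]
with all indices read modulo $m$. The base case $k=0$ is (\ref{(19)}), and the inductive step is a direct application of the defining rule (\ref{(15)}), which prepends the last entry and shifts the rest down. Setting $k=m$ collapses this back to $<p_1 ,\ldots ,p_m >=\pi$, which is (\ref{(22)}). Equivalently, one may invoke Theorem \ref{Theorem2}: since $f_r$ corresponds under $\alpha$ to moving the last row of an $m$-row matrix to the top, performing this $m$ times manifestly restores the matrix, hence the tuple.

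For (\ref{(23)}) I would first use that $f_c$ acts entry by entry, so by (\ref{(16)}) and (\ref{(21)}),
\[
f_c^n (\pi )=<\xi^n (p_1 ),\xi^n (p_2 ),\ldots ,\xi^n (p_m )>,
\]
which reduces the claim to showing $\xi^n (a)=a$ for every $a\in {\cal P}_n$. Writing $a=\sum_{i=0}^{n-1} b_i 2^i$ with each $b_i \in \{0,1\}$, the formula (\ref{(13)}) removes the lowest bit $b_0$ and reinserts it in the highest position, sending the digit string $b_{n-1} \cdots b_1 b_0$ to $b_0 b_{n-1} \cdots b_2 b_1$; this is exactly the bit-level computation already carried out in the proof of Theorem \ref{Theorem2}. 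Thus each application of $\xi$ is one cyclic rotation of the $n$-digit string, and after $n$ rotations every digit returns to its starting position, so $\xi^n (a)=a$ and (\ref{(23)}) follows.

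The only step needing care is this second identity: one must be sure that $\xi$ rotates all $n$ bits rather than, say, permuting only the lower $n-1$ while freezing the top bit. I would make this airtight either by the explicit bookkeeping above (the new coefficient of $2^{n-1}$ is the old $b_0$, and the new coefficient of $2^j$ is the old $b_{j+1}$ for $0\le j\le n-2$) or, more economically, by citing Theorem \ref{Theorem2}, where the column-shift interpretation of $f_c$ makes the period $n$ immediate. The $f_r$ identity presents no real obstacle, being a bare cyclic shift of order $m$.
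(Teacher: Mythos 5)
Your proof is correct; the paper itself dismisses this proposition with the single word ``Trivial,'' and your argument is precisely the verification it intends: $f_r$ is a cyclic shift of the $m$-tuple (order $m$), and $\xi$ is a cyclic rotation of the $n$-bit string (order $n$), acting componentwise to give $f_c^n=\mathrm{id}$. Your explicit bit-level bookkeeping for $\xi$ is the same computation already carried out in the paper's proof of Theorem \ref{Theorem2}, so nothing further is needed.
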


\begin{proof} Trivial.
\end{proof}

As a direct consequence of Theorem \ref{Theorem1}, Theorem \ref{Theorem2}, Proposition \ref{Proposition2} and their constructive proofs, it follows that the following algorithm that finds exactly one representative of each equivalence class with respect to the defined in Problem \ref{problem2} equivalence relation  $\rho$   and that calculates the cardinality of the factor set  ${{\cal B}_{m\times n}}_{/\rho}$.

\begin{algorithm}\label{Algorithm2} Receives exactly one representative of each equivalence class of the factor-set  $\widetilde{M}=M_{/\rho}$ and calculates the cardinality of the factor set $\widetilde{M}=M_{/\rho}$  when $m$ and $n$ are given.

\begin{enumerate}
\item We declare the $m$-dimensional Boolean arrays $W1$ and $W2$ which we will be indexed by using the elements of the set  ${\cal P}_n^m$, i.e. $W1[<p_1 ,p_2 ,\ldots ,p_m >]$  will correspond to the element  $<p_1 ,p_2 ,\ldots ,p_m > \in{\cal P}_n^m$. We proceed analogically with the array  $W2$.
	
\item Initially we take all elements of $W1$ and $W2$  to be 0. In $W1$  we will remember all elements selected from ${\cal B}_{m\times n}$  (one for each equivalence class) by changing $W1[<p_1 ,p_2 ,\ldots ,p_m >]$ to 1 if we have selected the element $\alpha ( <p_1 ,p_2 ,\ldots ,p_m >)$  for a representative of the respective equivalence class. We will change the elements of  $W2$ to 1 for each selection of an element from   ${\cal B}_{m\times n}$, i.e. for each  $\pi'' \in {\cal P}_n^m$, for which there exists  $\pi' \in {\cal P}_n^m$, such that  $W1[\pi' ]=1$  and  $\alpha (\pi'' )\rho \alpha (\pi' )$, or in other words, $\pi'$ and $\pi''$ encode  two different matrices of the same equivalence class as we have chosen $\alpha (\pi' )$  for a representative of this equivalence class.

\item We declare the counter  $N$, which we initialize by 0. In case of normal ending of the algorithm, $N$  will be showing the cardinality of the factor set  ${{\cal B}_{m\times n}}_{/\rho}$.

\item While a zero element exists in в $W2$  \textbf{do}

\hspace{1cm}	\{ \textbf{Begin} of loop 1

\item \hspace{1cm} We choose the minimal $\pi =<p_1 ,\pi_2 ,\ldots ,\pi_m >\in {\cal P}_n^m$  according to the lexicographic order, for which  $W1[\pi ]=0$.

\item \hspace{1cm}   $W1[\pi ]:=1$;

\item \hspace{1cm}  $N:=N+1$;

\item \hspace{1cm} For $i=1,2,\ldots ,m$  \textbf{do}

\hspace{2cm}		\{ \textbf{Begin} of loop 2

\item \hspace{2cm} $\pi = f_r^i (\pi )$.

\item \hspace{2cm} For $j=1,2,\ldots , ,n$  \textbf{do}

\hspace{3cm} \{ \textbf{Begin} of loop 3

\item \hspace{3cm} $\pi :=f_c^j (\pi )$;

\item \hspace{3cm} $W2[\pi ]:=1$;

\hspace{3cm}	End of loop 3\}

\hspace{2cm}	End of loop 2\}

\hspace{1cm}	End of loop 1\}

\item \textbf{End} of the algorithm.
\end{enumerate}
\end{algorithm}

\section{CONCLUSIONS}

Applying the above ideas, a computer program that receives a computer program that gets only one representative from each equivalence class of the factor-set  $\widetilde{B}_{n\times n} =B_{n\times n/\rho}$. The purpose of these calculations was to describe and classify some textile structures [\cite{textile}]. The results relate to obtaining quantitative estimation of all kinds of textile fabric.

In fact, the cardinality of the factor-set M coincides with an integer sequence noted in On-Line Encyclopedia of Integer Sequences [\cite{A179043}] as number A179043, namely

\begin{center}
A179043=\{ 2, 7, 64, 4156, 1342208, 1908897152, 11488774559744, 288230376353050816, 29850020237398264483840, 12676506002282327791964489728, 21970710674130840874443091905462272, 154866286100907105149651981766316633972736, ... \}
\end{center}


\begin{thebibliography}{}
\bibitem[\protect\citeauthoryear{Aigner}{1979}]{r1}
Aigner,  M. (1979) Combinatorial Theory. {Springer-Verlag}.
\bibitem[\protect\citeauthoryear{Borzunov}{1983}]{b2}
Borzunov,C. W. (1983) Thextile Industry-survey Information. v.3. Moscow: {CNII ITEILP}, . (in Russian)
\bibitem[\protect\citeauthoryear{Bogopolski}{2008}]{r2}
Bogopolski, O (2008) Introduction to Group Theory. Zurich: {European Mathematical Society}.
\bibitem[\protect\citeauthoryear{Collins}{2002}]{Collins}
Collins, W. J. (2002) Data structures and the standard template library. {McGraw-Hill}.
\bibitem[\protect\citeauthoryear{Curtis and Rainer}{1962}]{r4}
Curtis, C. W.  and Rainer, I.(1962, 2006) Representation Theory of Finite Groups and Associative Algebras.  New York: {Wiley-Interscience}.
\bibitem[\protect\citeauthoryear{De Vos}{2010}]{r6}
De Vos, A. (2010) Reversible Computing - Fundamentals, Quantum Computing, and Applications. {Wiley}.
\bibitem[\protect\citeauthoryear{Mishra}{2014}]{arxiv}
Mishra, P. (2014) A New Algorithm for Updating and Querying Sub-arrays of Multidimensional Arrays. arXiv: 1311.6093.
\bibitem[\protect\citeauthoryear{Reingold, Nievergeld and Deo}{1977}]{Reingold}
Reingold, E.M., Nievergeld, J. and Deo N. (1977) Combinatorial algorithms -- Theory and practice. {Prentice Hall}.
\bibitem[\protect\citeauthoryear{Sutter}{2002}]{Sutter}
Sutter, H. (2002) Exeptional C++, {Addison-Wesley}.
\bibitem[\protect\citeauthoryear{Sachkov and. Tarakanov}{2002}]{sachkov}
V. N. Sachkov and. Tarakanov,  V. E (2002) Combinatorics of Nonnegative Matrices, {Amer. Math. Soc}.
\bibitem[\protect\citeauthoryear{Tan, Steeb and Hardy}{2001}]{Tan}
Tan Kiat Shi,  Steeb W.-H. and Hardy Y. (2001) Symbolic C++: An Introduction to Computer Algebra using Object-Oriented Programming. {Springer}.
\bibitem[\protect\citeauthoryear{Yordzhev}{2005}]{y9}
Yordzhev, K. (2005) On an equivalence relation in the set of the permutation matrices. {in:  Discrete Mathematics and Applications, SWU Blagoevgrad (BG) and University of Potsdam(GER):}  pp 77--87.
\bibitem[\protect\citeauthoryear{Yordzhev}{2014}]{az}
Yordzhev, K. (2014)  On the cardinality of a factor set in the symmetric group. {Asian-European Journal of Mathematics}, Vol. 7, No. 2  1450027 (14 pages), DOI: 10.1142/S1793557114500272
\bibitem[\protect\citeauthoryear{Yordzhev and Kostadinova}{2012}]{textile}
Yordzhev, K. and Kostadinova H. (2012) Mathematical Modelling in the Textile Industry. {Bulletin of Mathematical Sciences \& Applications}, Vol. 1, No. 1, pp. 20 -35.
\bibitem[\protect\citeauthoryear{Yordzhev and Markovska}{2007}]{markovska}
Yordzhev, K. and Markovska, A. (2007) Method of the Multidimensional Sieve in the Practical Realization of some Combinatorial Algorithms. Proceedings of Int. Sc. Conf. {UNITECH 07}, 23.11 -- 24.11. 2007, Gabrovo, Bulgaria, Vol. II, pp. 451--456.
\bibitem[\protect\citeauthoryear{Encyclopedia}{2015}]{A179043}
On-Line Encyclopedia of Integer Sequences (2015). A179043--Number of  $n\times n$  checkered tori,  \href{http://oeis.org/A179043}{http://oeis.org/A179043}. {\it (Last accessed on 15 Jan 2016)}\\[6mm]




\end{thebibliography}
\end{document}